\newcommand{\cslonly}[1]{}
\newcommand{\set}[1]{\left\{ \, #1 \,\right\}}
\newcommand{\pTrunc}[1]{\left\| #1 \right\|}
\newcommand{\abs}[1]{\left| #1 \right|}
\newcommand{\parens}[1]{\left( #1 \right)}
\def\lam#1{{\lambda}\@lamarg#1:\@endlamarg\@ifnextchar\bgroup{.\,\lam}{.\,}}
\def\@lamarg#1:#2\@endlamarg{\if\relax\detokenize{#2}\relax #1\else\@lamvar{\@lameatcolon#2},#1\@endlamvar\fi}
\def\@lamvar#1,#2\@endlamvar{(#2\,{:}\,#1)}
\def\@lameatcolon#1:{#1}
\def\lamu#1{{\lambda}\@lamuarg#1:\@endlamuarg\@ifnextchar\bgroup{.\,\lamu}{.\,}}
\def\@lamuarg#1:#2\@endlamuarg{#1}
\newcommand{\journalonly}[1]{}
\newcommand{\UU}{\mathcal{U}} 
\newcommand{\VV}{\mathcal{V}} 
\newcommand{\Prop}{\mathrm{HProp}} 
\newcommand{\bool}{\mathbf{2}} 
\newcommand{\true}{\mathsf{t\!t}} 
\newcommand{\false}{\mathsf{f\!f}} 
\newcommand{\rat}{\operatorname{{rat}}} 
\newcommand{\eq}{\operatorname{\mathsf{eq}}} 
\renewcommand{\lim}{\operatorname{{lim}}} 
\newcommand{\sm}[1]{{(\Sigma{#1})}} 
\newcommand{\dpt}[1]{{(\Pi{#1})}} 
\newcommand{\ex}[1]{{(\exists{#1})}} 
\newcommand{\fa}[1]{{(\forall{#1})}} 
\newcommand{\pow}{\mathcal{P}} 
\newcommand{\N}{\mathbb{N}} 
\newcommand{\Q}{\mathbb{Q}} 
\newcommand{\Z}{\mathbb{Z}} 
\newcommand{\R}{\mathbb{R}} 
\newcommand{\RC}{{\mathbb{R}_\mathbf{C}}} 
\newcommand{\Rc}{{\mathbb{R}_c}} 
\newcommand{\RH}{{\mathbb{R}_\mathbf{H}}} 
\newcommand{\RD}{{\mathbb{R}_\mathbf{D}}} 
\newcommand{\RE}{{\mathbb{R}_\mathbf{E}}} 
\newcommand{\isContr}{\operatorname{{isContr}}} 
\newcommand{\CA}{\mathcal{C}} 
\newcommand{\isCauchy}{\operatorname{{isCauchy}}} 
\newcommand{\refl}{\operatorname{\mathsf{refl}}} 
\newcommand{\ap}{{\operatorname{{ap}}}} 
\newcommand{\proj}[1]{\operatorname{{pr}_{#1}}} 
\newcommand{\hottbook}{\textcite{hottbook}}
\newcommand{\eqv}{\simeq} 
\newcommand{\Iff}{\Leftrightarrow} 
\newcommand{\Implies}{\Rightarrow} 
\newcommand{\CShom}{\operatorname{{CS-hom}}} 
\newcommand{\subCShom}{\operatorname{{sub-CS-hom}}} 
\theoremstyle{plain}
\newtheorem{theorem}{Theorem}
\newtheorem*{theorem*}{Theorem}
\newtheorem{lemma}[theorem]{Lemma}
\newtheorem{proposition}[theorem]{Proposition}
\newtheorem{corollary}[theorem]{Corollary}
\theoremstyle{definition}
\newtheorem{definition}[theorem]{Definition}
\newtheorem*{definition*}{Definition}
\newtheorem*{example*}{Example}
\newtheorem{claim}{Claim}
\theoremstyle{remark}
\newtheorem{remark}[theorem]{Remark}
\providecommand{\claimproofname}{Proof of claim}
\title{The HoTT book reals coincide with the Euclidean reals}
\date{}
\author{Auke B. Booij}
\begin{document}

\maketitle

\begin{abstract}
  Escard\'o and Simpson defined a notion of interval object by a universal
  property in any category with binary products. The Homotopy Type Theory book
  defines a higher inductive-inductive notion of reals, and suggests that the
  interval in this type may satisfy this universal property. We show that this
  is indeed the case in the category of sets of any universe. We also show that
  the type of HoTT reals is the smallest Cauchy complete subset of the Dedekind
  reals containing the rationals.
\end{abstract}

\section{Introduction}
\label{cha:univalent:math}
Escard\'o and Simpson introduced the notion of \emph{interval object},
which can be defined in any category with finite products, as a
universal property for closed and bounded real line
segments~\cite{escardo:simpson:interval}.  Indeed, in the category of
classical sets, the real interval $[-1,1]$ is an interval object.  In
the category of topological spaces, the real interval $[-1,1]$ with
the Euclidean topology is an interval object.
\Textcite{vickers:interval} showed that in the category of
locales, the locale corresponding to the interval $[-1,1]$ is an
interval object.

In a topos, the interval $[-1,1]$ in a certain subobject $\RE$ of the
Dedekind reals is an interval object.  The object $\RE$, referred to
as the Euclidean reals, is defined as the least Cauchy complete subset
of the Dedekind reals containing the rationals.  This can be
constructed as the intersection of all Cauchy complete subsets of the
Dedekind reals that contain the rationals.

Assuming the propositional resizing axiom of
Definition~\ref{def:prop-resiz}, we can translate the construction of the
Euclidean reals $\RE$ as an intersection of subsets of $\RD$ into type
theory, and similarly translate the proof that the interval in $\RE$
is an interval object.  The fact that $\RE$ is the least Cauchy
complete subset of $\RD$ containing the rationals is then easily
verified.  The Euclidean reals sit between the Cauchy reals and the
Dedekind reals: we have the sequence of canonical inclusions
\[
  \RC\subseteq\RE\subseteq\RD
\]
where neither of the inclusions can be shown to be an equality.  This reminds us
of the HoTT book reals---whose definition we recall in
Section~\ref{sec:reals:hott}---which also sits between $\RC$ and $\RD$ in a
canonical way:
\[
  \RC\subseteq\RH\subseteq\RD.
\]

This raises the question whether the HoTT book reals and the Euclidean
reals coincide, so that the interval in the HoTT book reals would be
an interval object.
\textcite[Chapter~11,~Notes]{hottbook} indeed conjectures that $\RH$
and $\RE$ coincide.

\begin{quote}
  The fact that $\Rc$ is the least Cauchy complete
  archimedean ordered field, as was proved in Theorem~11.3.50,
  indicates that our Cauchy reals probably coincide with the
  Escard\'o-Simpson reals.  It would be interesting to check whether
  this is really the case.

  --- \textit{\textcite[Chapter~11,~Notes]{hottbook}.  Note that we
    introduce this object $\Rc$ as the ``HoTT book reals, $\RH$'' in
    Definition~\ref{def:hott:book:reals}.}
\end{quote}

When phrasing this question more precisely, we are reminded that we
may be working in a type theory in which we do not have the HoTT book
reals, or in a type theory which does not have propositional resizing,
so that we cannot construct the Euclidean reals.

We can relate $\RH$ to $\RE$ by showing that $\RH$ is the least Cauchy
complete subset of $\RD$ containing the rationals, as we do in
Section~\ref{sec:reals:euclidean}.  This result can be phrased without
propositional resizing, since $\RE$ can be characterized as the least
Cauchy complete subset of the Dedekind reals containing the
rationals. In particular, when we do have propositional resizing, we
can construct $\RE$ and, by this result, it coincides with $\RH$.

If we do not have $\RH$, we can still relate $\RE$ to $\RH$ by showing that
$\RE$ satisfies a universal property similar to the one of the HoTT book reals
given in Definition~\ref{def:hott:book:reals}.  This result in
Section~\ref{sec:assuming:prop} assumes propositional resizing.  In particular,
when we do have $\RH$, this result implies that $\RE$ and $\RH$ coincide.

We use propositional resizing to construct $\RE$, and by the above, we can prove
it has a certain universal property.  We may also wonder whether a least Cauchy
complete subset of the Dedekind reals containing the rationals, without knowing
the construction of $\RE$ as an intersection of subsets, has this universal
property.  This question can be phrased without propositional resizing, but we
do not answer it.

In summary, we confirm the conjecture of \hottbook\ in two ways, once assuming
the existence of $\RH$, and once assuming propositional resizing.  We leave an
open question in the absence of both $\RH$ and propositional resizing.

\section{Preliminaries}
\label{sec:preliminaries}

We use Martin-L\"of Type Theory with univalence, function extensionality,
quotient types, propositional truncation, higher inductive-inductive types, and
propositional resizing.

We write $\UU$ for a univalent universe.  Given a type $X$, we write
$\pTrunc{X}$ for its propositional truncation.  By ``there exists $a:A$ such
that $B(a)$'' we mean $(\exists(a:A).B(a)) \coloneqq \pTrunc{\sm{a:A}B(a)}$,
and by the disjunction $X \vee Y$ we mean $\pTrunc{X+Y}$.  An \emph{equivalence}
$e:X \eqv Y$ between two types $X$ and $Y$ is given by a map $f:X \to Y$ which
has both a left inverse and a right inverse, and we implicitly coerce equivalences
$X \eqv Y$ to their underlying map $X \to Y$.  We write $\Prop_\UU$ for the
type of propositions in universe $\UU$, or just $\Prop$ if we leave the universe
implicit.

\subsection{Subtypes and embeddings}
\label{sec:subtypes}

\begin{definition}\label{def:subtype}
  By a \emph{$\VV$-subtype} $B:\pow_\VV A$ of $A:\UU$ we mean a map
  $B:A\to\Prop_\VV$.  For $b:A$ we define $(b\in B) \coloneqq B(b)$.  We say
  \emph{subtype}, denoted $B:\pow A$, if we wish to leave the universe $\VV$
  implicit.  A \emph{subset} is a subtype of a type that is a
  set~\cite[Definition~3.1.1]{hottbook}.
\end{definition}
This is motivated by the fact that if $B:\pow_\VV A$ is a subtype
of $A$, then the projection map $\proj1:\sm{a:A}B(a)\to A$ is an
\emph{embedding}, and vice versa embeddings give rise to subtypes, as
we will make precise in Lemma~\ref{lem:embeddings-subtypes}.

\begin{definition}\label{def:embeddings}
  Given a function $f:C\to A$, we say $f$ is an
  \emph{embedding}, and write $f:C\hookrightarrow A$,
  if $\ap_{f,c,c'}:(c=_{C}c')\to(fc=_{A}fc')$ is an equivalence
  for all $c,c':C$.
\end{definition}

Definitions~\ref{def:subtype} and~\ref{def:embeddings} are equivalent
in the following sense.
\begin{lemma}\label{lem:embeddings-subtypes}
  A subtype $B:\pow_\VV A$ of $A:\UU$ gives rise to a type
  $C:\UU\sqcup\VV$ that embeds into $A$, where $\UU\sqcup\VV$ is the
  least universe above $\UU$ and $\VV$.  Conversely, a type
  $C:\VV$ with an embedding into $A:\UU$ gives rise to a subtype
  $B:\pow_{\UU\sqcup\VV}A$.  These constructions are inverse to each
  other.
\end{lemma}
\begin{proof}
  In one direction, the type $C\coloneqq\sm{b:A}b\in B$ embeds into
  $A$ by the projection map.  Conversely, given an embedding
  $f:C\hookrightarrow A$, the subtype is given by
  $B(a)\coloneqq\sm{c:C} (f c = a)$, which is well-defined by the fact
  that $f$ is an embedding.  For details, see e.g.\
  \textcite[Theorem~3.29]{RijkeSpitters:2015:Sets-In-HoTT}.
\end{proof}
\begin{remark}
  This result uses univalence to show an equality of types.
\end{remark}
We will often use this correspondence implicitly.

\begin{lemma}\label{cor:subtypes-poset}
  For $A:\UU$, the type $\pow_\VV A$ is a partial order with $\subseteq$.
  Explicitly, with $P,Q,R:\pow_\VV A$:
  \begin{enumerate}
  \item $P\subseteq P$,
  \item $(P\subseteq Q) \to (Q\subseteq P) \to P = Q$,
  \item $(P\subseteq Q) \to (Q\subseteq R) \to (P\subseteq R)$.
  \end{enumerate}
\end{lemma}
\begin{proof}
  Straightforward, where antisymmetry uses function extensionality and propositional extensionality.
\end{proof}

Section~\ref{sec:reals:euclidean} uses the following formulation of
the antisymmetry of Lemma~\ref{cor:subtypes-poset}.  Note that for
embeddings from $C$ and $D$ into $A$, the relation $C\subseteq D$
holds if we have a certain commutative triangle as below.

\begin{lemma}\label{lem:triangle:embeddings}
  Suppose given a triangle of maps as follows.

  \begin{tikzcd}[column sep={1.5cm,between origins}]
    C
    \arrow[rr,yshift=0.6ex,"f"]
    \arrow[ddr,hookrightarrow,"i_C" left]
    &
    &
    D
    \arrow[ddl,hookrightarrow,"i_D" right]
    \\ & \node[yshift=1cm]{\bigcirc}; &
    \\
    &
    A
    &
  \end{tikzcd}

  If $i_C$ and $i_D$ are embeddings, and the commutativity condition
  $i_D\circ f = i_C$ is satisfied, then $f$ is an embedding.
\end{lemma}
\begin{proof}
  We can show, by induction on identity types, that
  $\ap_{i_C}=\ap_{i_D}\circ\ap_f$.  Then, by a two-out-of-three
  property for equivalences~\cite[Theorem~4.7.1]{hottbook}, we get
  that $\ap_f$ is an equivalence, as required.
\end{proof}

\subsection{Cauchy structures}
\label{sec:cauchy-structures}

Let $\N$ and $\Z$ be appropriate types of naturals and integers.  We can define
a type $\Q$ of rationals as in~\textcite[Section~11.1]{hottbook} with
their ordering $<$.  It will be convenient to additionally define the type of
positive rationals:
\[
  \Q_+\coloneqq\set{q:\Q \mid q>0 }=\sm{q:\Q}(q>0).
\]

Following \textcite{sojakova:hit}, we take an algebraic view on types
of real numbers and higher inductive-inductive types (HIITs).  We will
do this, as opposed to directly giving the type-theoretic inference
rules of the HIIT, in order to make a clear link with the Euclidean
reals in Section~\ref{sec:reals:euclidean}.  By analogy with
\textcite{MALQ:MALQ200710024} and the \hottbook, we define premetric
spaces.
\begin{definition}
  A \emph{premetric} on a type $R:\UU$ is a relation
  \[
    \cdot\sim_\cdot\cdot:R \times \Q_+\times R \to \Prop.
  \]
  We will often write $R$ for the \emph{premetric space}
  $(R,\sim)$, leaving the premetric $\sim$ implicit.
\end{definition}
Note the outright lack of natural conditions one might put on $\sim$: our
premetric spaces are a much wilder notion than Richman's, lacking even something
as basic as a triangle inequality.  In fact, having few conditions here is a
good thing, as any conditions introduced now would need to be respected later by
the induction principle in Definition~\ref{def:hott:book:reals}, thus making
that induction principle harder to use.

\begin{definition}
\label{def:cauchy-approximation}
  If $R$ is a premetric space, then $x:\Q_+\to R$ is a
  \emph{Cauchy approximation} if
  \begin{equation}
    \isCauchy(x)\coloneqq\fa{\delta,\varepsilon:\Q_+}
    x_\delta\sim_{\delta+\varepsilon}x_\varepsilon.\label{eq:metric:cauchy:builtin}
  \end{equation}
  We define the type $\CA_{R}$ of Cauchy approximations
  in $R$ as
  \[
    \CA_{R}\coloneqq\sm{x:\Q_+\to R} \isCauchy(x).
  \]
  Since being a Cauchy approximation is a property rather than
  structure, we implicitly coerce elements of $\CA_{R}$ to their underlying
  map $\Q_+\to R$.
\end{definition}

\begin{definition}\label{def:limits:premetric}
  If $x$ is a Cauchy approximation in a premetric space $R$,
  then we say that $u:R$ is a \emph{limit} of $x$ if
  \[
    \fa{\varepsilon,\theta:\Q_+}
    x_\varepsilon\sim_{\varepsilon+\theta}u.
  \]
  If there exists a limit for every Cauchy approximation, we say that $R$ is
  \emph{Cauchy complete}.
\end{definition}
For the types we will in any regard consider Cauchy complete, limits will be
unique, so that we can always compute them.

In our very weak notion of premetric spaces, we do not automatically have
uniqueness of limits, so that the \emph{existence} of limits does not imply that
we can \emph{compute} limits.  Instead, we will simply assume that we have a
$\lim$ map for this purpose.

\begin{definition}\label{def:cauchy:structure}
  A \emph{Cauchy structure} is a premetric space $(R,\sim)$ together
  with the following structure, collected in a $\Sigma$-type.
  \begin{align*}
    \rat:\;&\Q\to R \\
    \lim:\;&\CA_{R}\to R \\
    \eq:\;& \dpt{u,v:R} \parens{ \fa{\varepsilon:\Q_+}  u
              \sim_\varepsilon v} \to u =_R v
    \\
    d_{\rat,\rat}:\;& \dpt{q,r:\Q}\dpt{\varepsilon:\Q_+}
      (-\varepsilon<q-r<\varepsilon) \to \rat(q)\sim_\varepsilon
      \rat(r) \\
    d_{\rat,\lim}:\;& \dpt{q:\Q}\dpt{y:\CA_{R}}
      \dpt{\varepsilon,\delta:\Q_+}
      \rat(q)\sim_{\varepsilon}y_\delta\to
      \rat(q)\sim_{\varepsilon+\delta}\lim(y) \\
    d_{\lim,\rat}:\;& \dpt{x:\CA_{R}}\dpt{r:\Q}
      \dpt{\varepsilon,\delta:\Q_+}
      x_\delta\sim_{\varepsilon}\rat(r)\to
      \lim(x)\sim_{\varepsilon+\delta}\rat(r) \\
    d_{\lim,\lim}:\;& \dpt{x:\CA_{R}}\dpt{y:\CA_{R}}
      \dpt{\varepsilon,\delta,\eta:\Q_+}
      x_\delta \sim_{\varepsilon} y_\eta\to
      \lim(x) \sim_{\varepsilon+\delta+\eta} \lim(y)
  \end{align*}
  A morphism of Cauchy structures from $R$ to $S$ is a
  map $f:R\to S$ and a family of maps
  $g_{\varepsilon,u,v}:u\sim_\varepsilon v \to f(u)\sim_\varepsilon
  f(v)$ that preserve $\rat$, $\lim$ and $\eq$ in the obvious
  sense.  Explicitly:
\begin{align*}
  \CShom(R,S) \coloneqq
  & \sm{f:R\to S}\\
  & \sm{g:\dpt{u,v:R} \dpt{\varepsilon:\Q_+}u\sim_\varepsilon v
    \,\to\, f(u)\sim_\varepsilon f(v)} \\
  & \left(\dpt{q:\Q} f(\rat(q))=\rat(q)\right) \\
  \times
  & \left(\dpt{x:\CA_R} f(\lim(x))=\lim(f \circ x)\right)\\
  \times
  & \big(\dpt{u,v:R}
    \dpt{p:\fa{\varepsilon:\Q_+}u\sim_\varepsilon
    v} \\
  &\qquad \ap_f(\eq(u,v,p))=\eq(f(u),f(v),\lambda\varepsilon.g(u,v,\varepsilon,p(\varepsilon))\big).\\
\end{align*}
\end{definition}

The role of the perhaps arbitrary-looking distance laws is that they couple the
behavior of $\rat$, $\lim$, and the pseudometric $\sim$.  They constitute a bare
minimum of data to formulate an induction principle.
\begin{remark}\label{rem:cauchy:structure}\leavevmode
  \begin{enumerate}
  \item Identity maps are Cauchy structure morphisms, and Cauchy structure
    morphisms are closed under composition.
  \item The distance laws of a Cauchy structure are automatically preserved by
    Cauchy structure morphisms, as $\sim$ is valued in propositions.
  \item A morphism of Cauchy structures from $R$ to $S$ lifts to
    a map $\CA_{R}\to\CA_{S}$ on the Cauchy approximations.
  \item We emphasize that even though a Cauchy structure has the
    $\lim$ map, it need not be Cauchy complete, since the elements
    $x_\varepsilon$ of a Cauchy approximation might not be of the form
    $\rat(q)$ or $\lim(z)$.  In other words, the $\lim$ map does not
    necessarily compute limits.

    For example, we may define a Cauchy structure on a type $\bool$
    with two elements, where both $\rat$ and $\lim$ constantly output
    $\false$, and we have the relations $\true\sim_\varepsilon\true$
    and $\false\sim_\varepsilon\false$ for all $\varepsilon$, but
    nothing else.  Then we have a Cauchy approximation that is
    constantly $\true$, and $\lim$ computes it limit as $\false$,
    which is not a limit in the sense of
    Definition~\ref{def:limits:premetric}---a valid limit would be
    $\true$.
  \end{enumerate}
\end{remark}

\subsection{HoTT book reals}
\label{sec:reals:hott}

We recall the definition of the HoTT book reals
$\RH$~\cite[Section~11.3]{hottbook}.  But for reasons of convenience, we will
use an algebraic definition centered around Cauchy structures.  This is
equivalent as we will show in Theorem~\ref{thm:hott-induction}, so that the
variant used here is justified.

Defining types inductively as a kind of homotopy-initial structure is a typical
approach in HoTT.  For instance, the circle may be defined as the
homotopy-initial structure among those types that have a point, and a path from
that point to itself, and where morphisms are defined as maps that preserve both
the point and the path.

\begin{definition}\label{def:hott:book:reals}
  $\RH$ is a homotopy-initial Cauchy structure, in the sense that for any other Cauchy
  structure $S$ (in any universe), the type of Cauchy structure morphisms from
  $\RH$ to $S$ is contractible.
\end{definition}

The equivalence of this definition with \hottbook\ comes from the fact that the
two definitions satisfy the same universal property.  More precisely, we will
develop an induction principle for $\RH$, so that $\RH$ is equivalent, and hence
by univalence, identical, to~\textcite[Section~11.3]{hottbook}.

However, we do not use the fact that Definition~\ref{def:hott:book:reals} is
equivalent to~\textcite[Section~11.3]{hottbook} in this paper.  The remainder of
this section is merely motivation for our definition.

\begin{definition}
  Given
  \begin{align*}
    A&:\RH\to\UU \\
    B&:\dpt{u,v:\RH}A(u)\to A(v) \to \dpt{\varepsilon:\Q_+}
    (u \sim_\varepsilon v) \to \Prop
  \end{align*}
  we obtain a natural premetric on
  $\sm{u:\RH}A(u)$, given by the relation:
  \[
    (u,a) \sim_\varepsilon (v,b) \coloneqq
      \sm{\zeta: u \sim_\varepsilon v} B(u,v,a,b,\varepsilon,\zeta)
  \]
\end{definition}

For the remainder of this section, fix a choice of $A:\RH\to\UU$ and
$B:\dpt{u,v:\RH}A(u)\to A(v) \to
\dpt{\varepsilon:\Q_+}(u \sim_\varepsilon v)\to\Prop$ --- these type
families will be input for our induction principle.  The remaining
input will allow us to define a Cauchy structure on $\sm{u:\RH} A(u)$.
We will often denote the type $B(u,v,a,b,\varepsilon,\zeta)$ by
$a\sim_\varepsilon b$, since $u$ can typically be inferred from $a$
and $v$ from $b$, and $\zeta$ is unique since the premetric on $\RH$
is valued in propositions.

\begin{definition}
  Let $x:\CA_{\RH}$ and
  $a:\dpt{\varepsilon:\Q_+}A(x_\varepsilon)$, satisfying
  \[
    \fa{\delta,\varepsilon:\Q_+}
    a_\delta\sim_{\delta+\varepsilon}a_\varepsilon.
  \]
  Then we call $a$ a \emph{dependent Cauchy approximation} over $x$.  We denote
  the type of all dependent Cauchy approximations over $x$ by
  $\mathcal{D}_{A}^x$, and again implicitly coerce its elements to their
  underlying (dependent) function.
\end{definition}
\begin{lemma}
  Suppose $x:\CA_{\RH}$ and
  $a:\dpt{\varepsilon:\Q_+}A(x_\varepsilon)$.
  Then the function
  \[
    \lambda \varepsilon . (x_\varepsilon , a_\varepsilon)
  \]
  is a Cauchy approximation in $\sm{u:\RH}A(u)$ iff $a$ is
  a dependent Cauchy approximation over $x$.
\end{lemma}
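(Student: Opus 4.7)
The plan is to unfold definitions on both sides and observe that the two conditions match up, with the fact that the premetric on $\RC$ is proposition-valued doing all the real work.

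First I would expand the statement that $\lambda \varepsilon.(x_\varepsilon, a_\varepsilon)$ is a Cauchy approximation in $\sum_{u:\RC} A(u)$. Unfolding $\isCauchy$ together with the premetric on the sigma type supplied just before the lemma, this asks for
\[
  \forall(\delta,\varepsilon:\Q_+).\;
  \sum_{\zeta : x_\delta \sim_{\delta+\varepsilon} x_\varepsilon}
  B(x_\delta,x_\varepsilon,a_\delta,a_\varepsilon,\delta+\varepsilon,\zeta).
\]
On the other hand, $a$ being a dependent Cauchy approximation over $x$ asks exactly for the $B$-part of this, where the unique available $\zeta$ is left implicit by the notational convention $a_\delta \sim_{\delta+\varepsilon} a_\varepsilon$.

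For the forward direction I would simply project out the second component of the sigma to recover $B(x_\delta,x_\varepsilon,a_\delta,a_\varepsilon,\delta+\varepsilon,\zeta)$, which is by definition $a_\delta \sim_{\delta+\varepsilon} a_\varepsilon$; so $a$ is a dependent Cauchy approximation over $x$. For the backward direction, fix $\delta,\varepsilon:\Q_+$. Since $x:\mathcal{C}_{\RC}$, we obtain a witness $\zeta : x_\delta \sim_{\delta+\varepsilon} x_\varepsilon$, and by hypothesis on $a$ we obtain the required element of $B(x_\delta,x_\varepsilon,a_\delta,a_\varepsilon,\delta+\varepsilon,\zeta)$, so we can pair them to produce the desired element of the sigma.

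There is no real obstacle here: the main thing to check is that the sigma in the definition of the premetric on $\sum_{u:\RC} A(u)$ collapses cleanly in the presence of a Cauchy approximation $x$ in $\RC$. This is immediate because the premetric on $\RC$ is $\Prop$-valued, so any two candidates for $\zeta$ are equal; in particular the existence of $\zeta$ is not extra data beyond what is already supplied by $x$ being Cauchy, and the pairing and projection described above are mutually inverse up to propositional equality. Thus the two conditions are logically equivalent, as required.
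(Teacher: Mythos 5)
Your proof is correct and is exactly the straightforward unfolding of definitions that the paper leaves to the reader (the paper's proof is the single word ``Straightforward''). You correctly identify the one point worth flagging — that the $\Sigma$-type in the premetric on $\sum_{u:\RC}A(u)$ splits into a witness $\zeta$ (supplied by $x$ being Cauchy) and the $B$-part, and that propositionality of the premetric on $\RC$ makes the choice of $\zeta$ immaterial.
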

\begin{proof}
  Straightforward.
\end{proof}

The above lemma allows us to take limits componentwise, as we will do
in the proof of an induction principle in
Theorem~\ref{thm:hott-induction}.  To be able to phrase an induction
principle, we first define dependent identifications, namely the
identity of elements in a type family evaluated at identical elements
of $\RH$.

\begin{definition}\label{def:dependent:path}
  Given a type $A:\UU$, a type family $B:A\to\UU$, an identification
  $p:x=_Ay$ in $A$, and elements $u:B(x)$ and $v:B(y)$, the type of
  \emph{dependent identifications}
  $u=_p^Bv$ is defined by induction on $p$: if $p$ is $\refl(x)$ then
  $(u=_{\refl(x)}^B v) \coloneqq (u=_{B(x)}v)$. We refer to elements
  of $u=_p^Bv$ as \emph{identifications from $u$ to $v$ over
    $p$}.
\end{definition}

In particular, an identification $p:x=_Ay$ can be combined with a
dependent identification $q:u=_p^B v$ into an identification
$(x,u)=_{\sm{a:A}B(a)}(y,v)$ in the dependent sum type, and vice versa
an identification in the dependent sum type gives rise to an
identification $p$ in $A$ and a dependent identification over $p$.

\begin{theorem}\label{thm:hott-induction}
  Suppose we are provided
  \begin{align*}
    A&:\RH\to\UU \\
    B&:\dpt{u,v:\RH}A(u)\to A(v) \to \dpt{\varepsilon:\Q_+}
    (u \sim_\varepsilon v) \to \Prop
  \end{align*}
  and the following data.
  \begin{align*}
    f_{\rat}:\;&\dpt{q:\Q}A(\rat(q)) \\
    f_{\lim}:\;&\dpt{x:\CA_{\RH}}
             \mathcal{D}_{A}^x \to A(\lim(x))
    \\
    f_{\eq}:\;&\dpt{u,v:\RH}
              \dpt{a:A(u)}
              \dpt{b:A(v)}
              \dpt{p:\fa{\varepsilon:\Q_+}a\sim_\varepsilon b}
              a=_{\eq(u,v,p)}^A b
              \displaybreak[1]\\
    f_{d_{\rat,\rat}}:\;&
      \dpt{q,r:\Q}\dpt{\varepsilon:\Q_+}-\varepsilon<q-r<\varepsilon
      \to f_{\rat}(q)\sim_\varepsilon f_{\rat}(r)
      \displaybreak[2]\\
    f_{d_{\rat,\lim}}:\;  &
      \dpt{q:\Q} \dpt{y:\CA_{\RH}}
      \dpt{b:\mathcal{D}_{A}^y}
      \dpt{\delta,\varepsilon:\Q_+}
      \rat(q)\sim_{\varepsilon} y_\delta\\
      & \quad\quad\quad\quad
      \to f_{\rat}(q)\sim_{\varepsilon}b_\delta \to
      f_{\rat}(q) \sim_{\varepsilon+\delta} f_{\lim}(y,b)
      \displaybreak[3]\\
    f_{d_{\lim,\rat}}:\;  &
      \dpt{x:\CA_{\RH}}
      \dpt{a:\mathcal{D}_{A}^x}
      \dpt{r:\Q}
      \dpt{\delta,\varepsilon:\Q_+}
      x_\delta\sim_{\varepsilon} \rat(r)\\
      & \quad\quad\quad\quad
      \to a_\delta\sim_{\varepsilon} f_{\rat}(r) \to
      f_{\lim}(x,a) \sim_{\varepsilon+\delta} f_{\rat}(r)
      \displaybreak[3]\\
    f_{d_{\lim,\lim}}:\;  &
      \dpt{x,y:\CA_{\RH}}
      \dpt{a:\mathcal{D}_{A}^x}
      \dpt{b:\mathcal{D}_{A}^y}
      \dpt{\delta,\eta,\varepsilon:\Q_+}
      x_\delta\sim_{\varepsilon} y_\eta\\
      & \quad\quad\quad\quad
      \to a_\delta\sim_{\varepsilon}b_\eta \to
      f_{\lim}(x,a) \sim_{\varepsilon+\delta+\eta} f_{\lim}(y,b)
  \end{align*}
  In that case, we obtain
  \begin{align*}
    f:{}&\dpt{u:\RH} A(u) \qquad\text{and} \\
    g:{}&\dpt{u,v:\RH} \dpt{\varepsilon:\Q_+}
        \dpt{\zeta:x\sim_\varepsilon y} B(u,v,f(u),f(v),\varepsilon,\zeta),
  \end{align*}
  satisfying
  \begin{align*}
    f(\rat(q))&=f_{\rat}(q) \qquad\text{and} \\
    f(\lim(x))&=f_{\lim}(x,(f,g)[x]),
  \end{align*}
  where $(f,g)[x]$ is the dependent Cauchy approximation defined by
  \[
    (f,g){[x]}_\varepsilon\coloneqq f(x_\varepsilon).
  \]
\end{theorem}

\begin{proof}
  We reason similarly to \textcite{sojakova:hit}.  Write
  $T=\sm{u:\RH}A(u)$.  Given the input data, we can define a natural
  Cauchy structure on $T$.  For example,
  $\rat_T(q) \coloneqq (\rat(q),f_{\rat}(q))$.

  Hence, by homotopy-initiality of $\RH$, we obtain $h:\RH\to T$ and
  $i_{\varepsilon,u,v} : u \sim_\varepsilon v \to h(u)
  \sim_\varepsilon h(v)$ preserving $\rat$, $\lim$ and $\eq$ in the
  obvious sense.

  Postcomposing $h$ and $i$ (the latter componentwise) with the first projection
  functions $\proj1$ gives us a Cauchy morphism $\proj1\circ h : \RH\to\RH$.
  Now the uniqueness of the homotopy-initial map $\RH\to\RH$ gives us
  identifications of the form $\proj1(h(u))=u$ for $u:\RH$.

  Meanwhile, we have the second projection
  \[
    \proj2 \circ h : \dpt{u:\RH}A(\proj1(h(u)))
  \]
  which almost has the right type.  By transporting this dependent function
  pointwise along the identification $\proj1(h(u))=u$, and similarly for $i$, we
  obtain dependent functions $f$ and $g$ with the required type.

  From the fact that $h$ and $g$ form a Cauchy structure morphism, and thus
  preserve $\rat$ and $\lim$, we get that $f(\rat(q))=f_{\rat}(q)$, and similarly
  for $\lim$.
\end{proof}

We have shown that $\RH$ satisfies the same universal property as the
type defined in~\cite[Section~11.3]{hottbook}, so that the types are
equivalent.

Important properties of $\RH$ such as Cauchy completeness require some work.
For instance, the type of the $\lim$ map does not, on its own, guarantee that it
compute limits, as discussed in Remark~\ref{rem:cauchy:structure}.  However, we
may now import such important properties
from~\textcite[Section~11.3.2]{hottbook}, appealing to the fact that the types
are equivalent.

\subsection{Dedekind reals}
\label{sec:reals:dedekind}

A Dedekind real is defined by a pair $(L,U)$ of predicates $\pow\Q$ on $\Q$ with
some properties.  To phrase these properties succinctly, we use the following
notation for $x=(L,U)$:
\begin{align*}
  (q<x) &\coloneqq (q \in L) \qquad\text{and} \\
  (x<r) &\coloneqq (r \in U).
\end{align*}
This notation will be justified by the fact that $q\in L$ holds iff
$\rat(q)<x$, with $\rat$ the inclusion of the rationals into the Dedekind
reals, defined below.

\begin{definition}\label{def:dedekind}
  A pair $x=(L,U)$ of predicates on the rationals is a \emph{Dedekind
    cut} or \emph{Dedekind real} if it satisfies the four Dedekind
  properties:
  \begin{enumerate}
  \item \emph{bounded:} $\ex{q : \Q} q<x$ and
    $\ex{r : \Q} x<r$.
  \item \emph{rounded:} For all $q,r : \Q$,
    \begin{align*}
      q<x & \Iff \ex{q' : \Q} (q < q') \land (q'<x)
            \qquad\text{and}
      \\
      x<r & \Iff \ex{r' : \Q} (r' < r) \land (x<r').
    \end{align*}
  \item \emph{transitive:} $(q<x)\land (x<r)\Implies (q<r)$ for all
    $q, r : \Q$.
  \item \emph{located:} $(q < r) \Implies (q<x) \lor (x<r)$ for all
    $q, r : \Q$.
  \end{enumerate}

  The collection $\RD$ of pairs of predicates $(L,U)$
  together with proofs of the four properties, collected in a
  $\Sigma$-type, is called the \emph{Dedekind reals}.
\end{definition}
If we have $\Q:\UU$, and a hierarchy of universes $\UU:\VV$, then we may regard
$\RD$ as a type in $\VV$.

\begin{remark}
  \Textcite{hottbook} has
  \emph{disjointness}
  \[
    \fa{q:\Q}\neg(x<q\land q<x)
  \]
  instead of the transitivity property, which is equivalent to it in
  the presence of the other conditions, and it is this disjointness
  condition that we use most often in proofs.
\end{remark}
We will now endow $\RD$ with a Cauchy structure.
%
%
We define the embedding of the rationals, addition, subtraction, the inequality
relation, the absolute value function, and the premetric on $\RD$, for $q,r:\Q$
and $x,y:\RD$ and $\varepsilon:\Q_+$.  Note that following our notation for
$x=(L,U)$, the first 8 lines here are actually 4 pairs of definitions of a
Dedekind cut.  Well-definedness of these cuts needs to be checked, which we
refrain from doing here as they are (equivalent with) standard constructive
definitions.
\begin{eqnarray*}
  (q<\rat(r))&\coloneqq& q<r, \\
  (\rat(q)<r)&\coloneqq& q<r, \\
  (q<x+y)&\coloneqq&\exists(s,t:\Q).(q=s+t) \wedge (s<x) \wedge (t<y),\\
  (x+y<r)&\coloneqq&\exists(s,t:\Q).(r=s+t) \wedge (x<s) \wedge (y<t),\\
  (q<-x)&\coloneqq&x<-q,\\
  (-x<r)&\coloneqq&-r<x,\\
  (q<\abs{x})&\coloneqq&(q<x)\vee (q<-x),\\
  (\abs{x}<r)&\coloneqq&(x<r)\wedge (-x<r),\\
  (x<y)&\coloneqq&\exists(q:\Q).(x<q) \wedge (q<y),  \\
  x\sim_\varepsilon y&\coloneqq&\abs{x-y}<\varepsilon.
\end{eqnarray*}




\begin{theorem}\label{thm:dedekind-cauchy-complete}
  $\RD$ is Cauchy complete.
\end{theorem}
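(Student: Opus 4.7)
The plan is to construct the limit of a Cauchy approximation $x:\Q_+\to\RD$ directly as a cut, using ``shifted'' predicates whose offsets absorb the Cauchy slack. Writing each $x_\varepsilon=(L_\varepsilon,U_\varepsilon)$, define $\ell=(L,U)$ by
\[
  L(q)\defeq\exists(\varepsilon:\Q_+).\,q+\varepsilon<x_\varepsilon,
  \qquad
  U(r)\defeq\exists(\varepsilon:\Q_+).\,x_\varepsilon<r-\varepsilon.
\]
Inhabitedness, roundedness, and disjointness are then routine. Inhabitedness follows from inhabitedness of any single $x_\varepsilon$ together with density of $\Q$ to introduce the $\varepsilon$-shift. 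Roundedness reduces to roundedness of $x_\varepsilon$: if $q+\varepsilon<x_\varepsilon$ then one finds $q'$ strictly between $q+\varepsilon$ and $x_\varepsilon$, and $q'\in L$ (with the same $\varepsilon$) witnesses the forward direction; the upper case is symmetric. For disjointness, a joint witness $q\in L\cap U$ given by $\varepsilon$ and $\delta$ yields $x_\varepsilon>q+\varepsilon$ and $x_\delta<q-\delta$, hence $x_\varepsilon-x_\delta>\varepsilon+\delta$, contradicting the Cauchy hypothesis $x_\varepsilon\sim_{\varepsilon+\delta}x_\delta$.

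The main obstacle is locatedness, which I would discharge using the preceding rational approximation lemma. Given $q<r$, set $\varepsilon\defeq(r-q)/4$ and choose $n$ with $2^{-n}<\varepsilon$; the lemma supplies rationals $s,t$ with $s<x_\varepsilon<t$ and $t-s<\varepsilon$. It cannot happen that both $s\leq q+\varepsilon$ and $t\geq r-\varepsilon$, for then $t-s\geq r-q-2\varepsilon=2\varepsilon$. Decidability of the order on $\Q$ then selects a side: either $s>q+\varepsilon$, so $q+\varepsilon<s<x_\varepsilon$ witnesses $q\in L$; or $t<r-\varepsilon$, so $x_\varepsilon<t<r-\varepsilon$ witnesses $r\in U$.

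It remains to prove the limit property $x_\varepsilon\sim_{\varepsilon+\theta}\ell$. Unfolding the premetric and the absolute value, this splits into the two one-sided estimates $x_\varepsilon-\ell<\varepsilon+\theta$ and $\ell-x_\varepsilon<\varepsilon+\theta$. After unfolding addition, negation, and the order on $\RD$, each reduces to exhibiting rational enclosures: one must find an upper bound $s>x_\varepsilon$ and a lower bound $u\in L$ for $\ell$ (and dually) with $s-u<\varepsilon+\theta$. Both bounds are supplied by the approximation lemma applied to $x_\varepsilon$ and to $x_{\theta'}$ for a sufficiently small $\theta'$, the lower bound for $\ell$ being obtained by shifting a lower approximation of $x_{\theta'}$ down by $\theta'$ (which lies in $L$ by construction), and the closing inequality follows from the Cauchy bound $x_\varepsilon\sim_{\varepsilon+\theta'}x_{\theta'}$ after choosing $\theta'$ small relative to $\theta$. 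This last step is a bounded arithmetic calculation and requires no new ideas.
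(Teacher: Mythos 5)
Your construction coincides with the paper's (and the HoTT book's Theorem 11.2.12, to which the paper simply defers), up to the harmless replacement of the double shift $\varepsilon+\theta$ by a single existential $\varepsilon$, and your sketch of the cut axioms and the limit property is sound. The one small slip is in the roundedness argument: with $q+\varepsilon<q'<x_\varepsilon$ the witness for the forward direction is $q'-\varepsilon$ (so that $(q'-\varepsilon)+\varepsilon=q'<x_\varepsilon$), not $q'$ itself, since $q'\in L$ would require $q'+\varepsilon<x_\varepsilon$ whereas you only have $q'<x_\varepsilon$.
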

\begin{proof}

  Let $x:\mathcal{C}_\RD$.  We define $\lim(x):\RD$ by:
  \begin{eqnarray*}
    (q<\lim(x)) \coloneqq \exists (\varepsilon,\theta:\Q_+)
    . (q+\varepsilon+\theta < x_\varepsilon), \\
    (\lim(x)<r) \coloneqq \exists (\varepsilon,\theta:\Q_+)
    . (x_\varepsilon < r-\varepsilon-\theta ).
  \end{eqnarray*}
  The details of well-definedness of the map $\lim$, as well as a
  proof that it constructs limits, can be found in~\textcite[Theorem~11.2.12]{hottbook}.
\end{proof}

\begin{theorem}\label{thm:dedekind-cauchy-struct}
  $\RD$ and the previously constructed $\rat$ and $\lim$ can be
  completed to obtain a Cauchy structure.
\end{theorem}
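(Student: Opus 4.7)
The plan is to verify, one by one, the remaining components of a Cauchy structure from the definition in Section~\ref{sec:cauchy-structures}. The premetric $x \sim_\varepsilon y \defeq \abs{x - y} < \varepsilon$, the map $\rat$, and the map $\lim$ have already been given; what remains is the extensionality constructor $\eq$ together with the four compatibility axioms relating $\rat$, $\lim$, and $\sim$.

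First I would handle $\eq$: assuming $\forall(\varepsilon:\Q_+).\,\abs{x - y} < \varepsilon$, derive $x = y$. A Dedekind real is determined by its pair of predicates, so by propositional extensionality and the antisymmetry of $\subseteq$ (Lemma~\ref{lem:subtypes-poset}), it suffices to show that the lower and upper cuts of $x$ and $y$ coincide. For the lower cut, suppose $q < x$. By roundedness there is some $q'$ with $q < q' < x$. Setting $\varepsilon \defeq q' - q \in \Q_+$, the hypothesis $\abs{x - y} < \varepsilon$ together with $q' < x$ unfolds, via the definitions of absolute value and subtraction on $\RD$, to yield $q < y$. The three remaining inclusions are dual.

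Next I would verify the rational-preservation axiom, which reduces to the pure rational fact that if $-\varepsilon < q - r < \varepsilon$ then $\abs{\rat(q) - \rat(r)} < \varepsilon$, and unfolds through the definitions of $\rat$, subtraction and $\abs{\cdot}$ on $\RD$. The three axioms involving $\lim$ are each proved by unfolding the definition of $\lim$ on one or both sides of $\sim$. For instance, to show $\rat(q) \sim_\varepsilon y_\delta \Rightarrow \rat(q) \sim_{\varepsilon+\delta} \lim(y)$, I would unfold $(s < \lim(y))$ and $(\lim(y) < t)$ to existential statements about $y_{\varepsilon'}$ for small $\varepsilon'$, then use the Cauchy-approximation property of $y$ to transfer the comparison with $y_\delta$ across to $y_{\varepsilon'}$; the extra slack $\delta$ is exactly what is needed to cover the shift. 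The combined limit-limit axiom chains this same move on both sides.

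The main obstacle is bookkeeping: each unfolding of roundedness or of the definition of $\lim$ introduces an auxiliary strictly-positive rational that must be absorbed into the $\varepsilon + \delta$ (resp.\ $\varepsilon + \delta + \eta$) margin of the conclusion. Each absorption is an elementary rational inequality, so no step is genuinely hard, but the aggregate symbol-pushing is substantial. Accordingly, in keeping with the treatment of Theorem~\ref{thm:dedekind-cauchy-complete}, the cleanest presentation is to record the structure of the argument above and defer the mechanical verifications to the corresponding derivations in the HoTT book.
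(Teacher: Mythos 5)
Your overall decomposition matches the paper's: supply $\eq$ and then verify the four distance laws, treating the premetric $\abs{x-y}<\varepsilon$ as given. However, the details differ in two noticeable ways. For $\eq$, you argue directly: from $q<x$, pick $q'$ with $q<q'<x$ via roundedness, set $\varepsilon\defeq q'-q$, and extract the witness rationals from $x-y<\varepsilon$ to conclude $q<y$ using only transitivity and downward closure of the cut. This works and is arguably cleaner than the paper's argument, which first invokes \emph{locatedness} to split into $q<v$ or $v<q'$ and derives a contradiction in the second case; your route bypasses locatedness entirely. For the three distance laws involving $\lim$, you propose a direct unfolding of the definition of $\lim$ on one or both sides of $\sim$, absorbing the auxiliary $\theta'$-style positives into the slack $\varepsilon+\delta$ (or $\varepsilon+\delta+\eta$). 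The paper instead proves a small structural lemma, Lemma~\ref{lem:RD-lim-comm-addition}, namely $\lim(x+u)=\lim(x)+u$, and derives all three limit laws from it. That lemma packages the one nontrivial unfolding once, reducing the three laws to short algebraic manipulations of the premetric, so it buys a substantial reduction in the bookkeeping you correctly identify as the main burden of the direct approach. One caution: you propose deferring the mechanical verifications to ``the corresponding derivations in the HoTT book,'' but the HoTT book does not actually carry out the Cauchy-structure axioms for $\RD$; the deferral in the paper's Theorem~\ref{thm:dedekind-cauchy-complete} is specifically to Theorem~11.2.12 there, which covers well-definedness of $\lim$ and the limit property, not these laws. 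So your verifications would need to be carried out in full or via a lemma such as the paper's.
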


For this we will use the following.

\begin{lemma}\label{lem:RD-lim-comm-addition}
  For $x:\mathcal{C}_\RD$ and $u:\RD$, we have $\lim(x+u)=\lim(x)+u$,
  where $x+u$ is the Cauchy approximation given by
  $\lambda\varepsilon.x_\varepsilon+u$.
\end{lemma}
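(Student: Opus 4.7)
The plan is to prove equality via uniqueness of limits in $\RD$.  By Theorem~\ref{thm:dedekind-cauchy-complete}, $\lim(x+u)$ is a limit of the Cauchy approximation $x+u$.  I will show that $\lim(x)+u$ is also a limit of the same Cauchy approximation, and then conclude from uniqueness of limits.

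For the first step, it suffices to establish $(x_\varepsilon+u)\sim_{\varepsilon+\theta}(\lim(x)+u)$ for all $\varepsilon,\theta:\Q_+$.  Since $\lim(x)$ is a limit of $x$, we have $x_\varepsilon\sim_{\varepsilon+\theta}\lim(x)$, i.e.\ $\abs{x_\varepsilon-\lim(x)}<\varepsilon+\theta$.  The premetric on $\RD$ is translation-invariant, because $(a+c)-(b+c)$ and $a-b$ determine the same Dedekind cut; applying this with $a=x_\varepsilon$, $b=\lim(x)$, $c=u$ gives the required closeness.

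For uniqueness of limits it suffices to prove the separation property: if $v\sim_\varepsilon v'$ for all $\varepsilon:\Q_+$, then $v=v'$.  By extensionality of cuts it is enough to show $q<v\Leftrightarrow q<v'$, and symmetrically for the upper cuts.  Given $q<v$, roundedness yields a rational $q'$ with $q<q'<v$; instantiating the hypothesis at $\varepsilon=(q'-q)/2$ and unfolding $\abs{v-v'}<\varepsilon$ through the definitions of $+$, $-$ and $<$ on cuts produces a rational witness $q''$ with $q<q''<v'$, hence $q<v'$ by roundedness.

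The main obstacle will be the translation-invariance step: verifying $\abs{(a+c)-(b+c)}<\delta$ iff $\abs{a-b}<\delta$ requires chasing existential witnesses through the cut-based definitions of $+$, $-$ and $\abs{\cdot}$.  This is routine but unpleasant; as a fallback one can sidestep uniqueness altogether and give a direct proof by computing the lower and upper cuts of $\lim(x+u)$ and $\lim(x)+u$ and exhibiting matching rational witnesses, absorbing the slack $\varepsilon+\theta$ on one side into the strict inequality $s+\varepsilon+\theta<x_\varepsilon$ on the other.
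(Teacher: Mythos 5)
Your proof is correct, but it takes a genuinely different route from the paper's. The paper's proof is a one-liner: verify that $x+u$ is a Cauchy approximation (using that $\RD$ is a group under addition) and then \emph{unroll the definitions} of $\lim$ and of $+$ on cuts --- which is precisely your ``fallback'' approach, and it is indeed shorter than it looks, since the slack $\varepsilon+\theta$ in $q+\varepsilon+\theta<x_\varepsilon$ can simply be reshuffled into the witness $s$ for $q<\lim(x)+u$ and back. Your primary approach instead goes via uniqueness of limits: you show that $\lim(x)+u$ is also a limit of $x+u$ (via translation-invariance of the premetric, which is the same group-structure input the paper needs) and invoke a separation property to conclude equality. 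That argument is sound and arguably more conceptual, but it forces you to establish the separation property in the middle of the lemma, and that property is exactly the $\eq$ part of the Cauchy structure which the paper proves separately in Theorem~\ref{thm:dedekind-cauchy-struct} (and your Lemma is in fact quoted there to establish the distance laws, so one must be a little careful that the argument for $\eq$ does not itself use the present lemma --- it does not, so there is no actual circularity, but the direct-computation route sidesteps the issue entirely). In short: your primary route is valid and more structural, at the cost of duplicating the $\eq$ argument; your fallback coincides with the paper's intended proof and is leaner for a lemma meant to be a small auxiliary fact.
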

\begin{proof}
  One can show, for example by first showing that the Dedekind reals
  form a group, that $x+u$ is a Cauchy approximation.  The remainder
  is straightforward by unrolling the definitions.
\end{proof}
\begin{proof}[Proof of Theorem~\ref{thm:dedekind-cauchy-struct}]
  The remaining ingredients are $\eq$ and verification of the four
  distance laws.

  First, $\eq$.  Suppose that $u\sim_\varepsilon v$ for arbitrary
  $\varepsilon:\Q_+$.  We need to show that $u=v$.  Without loss of generality,
  we show that for $q:\Q$, if $q<u$ then $q<v$.  By roundedness, there exists
  $q':\Q$ with $q<q'<u$.  By locatedness, $q<v \vee v<q'$.  In the left case
  $q<v$ we are done.  In the right case, again by roundedness, there exists
  $r:\Q$ with $v<r<q'$.  By the fact that $u\sim_{(q'-r)}v$, there exist
  $s,s':\Q$ with $s'<v<s+(q'-r)$ and $s<u<s'+(q'-r)$.  So in particular
  $s'<v<r$, and thus $s'<r$.  But at the same time $q'<u<s'+(q'-r)$, which
  yields $r<s'$.  So we have the contradiction that $s'<r$ and $r<s'$.

  The distance law $d_{\rat,\rat}$ follows from the definition of $\rat$ and the
  fact that $\rat(q-r)=\rat(q)-\rat(r)$.  The remaining three distance
  laws can be shown by applying Lemma~\ref{lem:RD-lim-comm-addition}.
\end{proof}

By checking the details of \textcite[Theorem~11.3.50]{hottbook}, we get the following:
\begin{proposition}\label{prop:reals:subsets}
  The natural embedding $i_H:\RH\to\RD$ is a Cauchy structure morphism.
\end{proposition}

\subsection{Euclidean reals}
\label{sec:euclidean-reals}

\Textcite{escardo:simpson:interval} showed that, in
any elementary topos, the Euclidean real interval is an interval
object.  They carried out the proof in a type theory for
toposes~\cite{lambek:scott,MacLaneS:shegl,johnstone:sketches},
higher-order intuitionistic logic, which we adapt to our type theory,
assuming propositional resizing.

\begin{definition}\label{def:euclidean:reals:pred}
  The type $\RE$ of \emph{Euclidean reals} is defined as least Cauchy complete
  subset of the Dedekind reals containing the rationals.  In other words, for
  every $R:\pow_\UU\RD$ (where $\UU$ is arbitrary) containing the rationals, and
  which is Cauchy complete, we have $\RE\subseteq R$.
\end{definition}

When we think of $\Prop_\UU$ as a collection of truth values, motivated
by the subobject classifier $\Omega$ in toposes, we may consider the
possibility that there is only one such collection.
\begin{definition}\label{def:prop-resiz}
  \emph{Propositional resizing} holds if for any two universes
  $\UU,\VV$, we have \[\Prop_\UU\eqv\Prop_\VV.\]
\end{definition}
In the presence of propositional resizing, we will consequently write $\Prop$
and $\pow A$, dropping the universe.

\begin{lemma}\label{lem:subtypes-complete}
  Assuming propositional resizing, for any $A:\UU$, the type
  $\pow A$ of subsets of $A$ is a complete lattice.  That is,
  for any collection of subsets $E:\pow\pow A$, the
  union $\bigcup E:\pow A$, defined using propositional
  resizing by
  \[
    \left(\bigcup E\right)(a)\coloneqq\ex{B:\pow A}B\in E\land
    a\in B,
  \]
  is a join of $E$, and similarly the intersection $\bigcap E:\pow A$ defined by
  \[
    \left(\bigcap E\right)(a)\coloneqq\fa{B:\pow A}B\in E\Implies
    a\in B
  \]
  is a meet of $E$.
\end{lemma}

\begin{lemma}[Escard\'o~and~Simpson~\cite{escardo:simpson:interval}]\label{lem:euclidean:reals:impred}
  Assuming propositional resizing, the type $\RE$ of
  \emph{Euclidean reals} can be constructed as the meet (as in
  Lemma~\ref{lem:subtypes-complete}) of the subtypes of the Dedekind
  reals which are Cauchy complete and contain the rationals.
\end{lemma}

With propositional resizing, the type of all sets in a given universe is a
topos~\cite[Theorem~10.1.12]{hottbook}, with $\Prop$ acting as a subobject
classifier.  This allows us to interpret Escard\'o and Simpson's definition, and
construction, of interval objects in toposes.
\begin{theorem}
  Assuming propositional resizing, so that we can construct $\RE$ as
  an element of some universe $\UU$.  The unit interval in $\RE$ is an
  interval object, where interval objects are defined as in Escard\'o
  and Simpson with respect to that category of sets in universe $\UU$.
\end{theorem}
The proof is simply a translation of the proof
in~\textcite{escardo:simpson:interval}, where we note that our definition of
$\RE$ coincides with the definition in category-theoretic terms.

\section{Assuming existence of $\RH$}
\label{sec:reals:euclidean}

In order to relate $\RH$ to $\RE$, without assuming propositional resizing, we
relate $\RH$ to an arbitrary Cauchy complete subset $R$ of $\RD$ that contains
the rationals, using the homotopy-initiality of $\RH$ as in
Definition~\ref{def:hott:book:reals}.  This yields a canonical embedding of
$\RH$ into $R$, in a more direct way than~\textcite[Theorem~11.3.50]{hottbook}.
So we reduce the question of coincidence of $\RH$ and $\RE$ to the fact that
both are minimal Cauchy complete subsets of the Dedekind reals, answering the
conjecture positively.

Let $R:\pow_\UU\RD$ be a subtype of the Dedekind reals.
Note that since the homotopy-initiality of $\RH$ is relative to a Cauchy
structure in \emph{any} universe, the choice of universe $\UU$ will not matter.
We can consider the collection $\sm{x:\RD}x\in R$ of elements in $R$, as in
Section~\ref{sec:subtypes}.  We restrict the Cauchy structure of $\RD$ obtained
from Theorem~\ref{thm:dedekind-cauchy-struct} to $R$.

\begin{proposition}\label{prop:subset:cauchy:struct}
  Given a Cauchy complete subset $R:\pow\RD$ of the Dedekind reals
  containing the rationals, the Cauchy structure on $\RD$ restricts to
  a Cauchy structure on $R$.
\end{proposition}
\begin{proof}
  First, the premetric on $R$ is inherited from the one on $\RD$ by
  restriction: for $\varepsilon:\Q_+$ and $x,y:\RD$ with $\mu:x\in R$
  and $\nu:x\in R$, we simply say that
  $(x,\mu)\sim_\varepsilon(y,\nu)$ holds iff $x\sim_\varepsilon y$.

  The map $\rat:\Q\to\RD$ constructed in
  Theorem~\ref{thm:dedekind-cauchy-struct} is an embedding, so that we may see
  $\Q$ as a subtype $\Q:\pow\RD$ of the Dedekind reals.  Assuming $R$ is a
  subtype of $\RD$ containing the rationals, i.e.\ $\Q\subseteq R\subseteq\RD$,
  we also get $\rat:\Q\to R$ by a straightforward restriction.

  In order to phrase when we have a $\lim$ structure, we define a
  subset $\CA_R$ of the type $\CA_\RD$, consisting of Cauchy
  approximations in $R$, by, for $x:\CA_\RD$,
  \[
    \CA_R(x)\coloneqq\fa{\varepsilon:\Q_+}x_\varepsilon\in R,
  \]
  noting that this $\CA_R$, now seen as a type that embeds into
  $\CA_\RD$, is equivalent to the type of Cauchy approximations in
  $\sm{x:\RD}{x\in R}$.  By further assuming that $R$ is Cauchy
  complete in the sense that for every Cauchy approximation
  $x\in\CA_R$ of elements in $R$, i.e.\ $x:\CA_\RD$ such that
  $\fa{\varepsilon:\Q_+}x_\varepsilon\in R$, there exists a limit of
  $x$ in $R$, we obtain a $\lim$ map: after all, we can compute the
  limit in $\RD$ using the $\lim$ structure of $\RD$, and then Cauchy
  completeness of $R$ states that this unique limit is an element of
  $R$.


  The construction of $\eq$ follows from Definition~\ref{def:embeddings}.  That
  is, the projection map $\proj1:(\sm{x:\RD}x\in R)\to\RD$ is an equivalence
  between identity types of $R$ and identity types of $\RD$, so that we may
  appeal to the $\eq$ structure of $\RD$.

  The distance laws hold because the premetric on $R$ is just the
  restriction of the premetric on $\RD$.
\end{proof}

\begin{corollary}\label{cor:euclidean-inclusion-cauchy}
  The map $i_R$ that includes $R$ into $\RD$ is a Cauchy structure morphism.
\end{corollary}

Proposition~\ref{prop:reals:subsets} established $\RH$ as a subset of
$\RD$ using a Cauchy structure morphism $i_H:\RH\to\RD$.  So we have
two subsets $\RH$ and $R$ of $\RD$.  The following proposition tells
us that $\RH\subseteq R$.

\begin{proposition}\label{prop:hott-euclidean-subset}
  We have $\RH\subseteq R$ as subsets of $\RD$.  That is, there is a
  horizontal map in the following diagram making the triangle commute.

  \begin{tikzcd}[column sep={1.5cm,between origins}]
    \RH
    \arrow[rr,dashrightarrow,yshift=0.6ex,"f"]
    \arrow[ddr,hookrightarrow,"i_H" left]
    &
    &
    R
    \arrow[ddl,hookrightarrow,"i_R" right]
    \\ & \node[yshift=1cm]{\bigcirc}; &
    \\
    &
    \RD
    &
  \end{tikzcd}
\end{proposition}
\begin{proof}
  By homotopy-initiality of $\RH$, we obtain
  $f:\RH\to R$, and by the fact that Cauchy structure morphisms are
  closed under composition, using homotopy-initiality of the Cauchy
  structure of $\RH$ once more, we obtain the commutativity condition
  $i_R\circ f = i_H$.
\end{proof}
Lemma~\ref{lem:triangle:embeddings} additionally yields that the map
$f:\RH\to R$ above is an embedding.

We have shown that $\RH\subseteq R$ for an arbitrary Cauchy complete subset $R$
of $\RD$ containing the rationals.  Thus, in conclusion:

\begin{corollary}
  $\RH$ satisfies Definition~\ref{def:euclidean:reals:pred} of $\RE$.
\end{corollary}

\section{Assuming propositional resizing}
\label{sec:assuming:prop}

In the previous section, we have related $\RH$ and $\RE$ by showing
that $\RH$ is the least Cauchy complete subset of $\RD$ containing the
rationals---a result that requires having the type $\RH$ in the first
place.  In a type theory where $\RH$ is not given as a primitive type,
we can still relate the Euclidean reals and the HoTT book reals.  The
HoTT book reals are defined uniquely by their universal property; that
is, any two homotopy-initial Cauchy structures are equal.  The goal of
this section is to show that $\RE$ satisfies that same universal
property, so that when we do have $\RH$, it coincides with $\RE$.

We borrow two strategies from the proof of
Escard\'o--Simpson~\cite{escardo:simpson:interval}
that the interval in the Euclidean reals is an interval object, namely
\begin{enumerate}
\item defining a dcpo, such that the construction of a certain point
  of that dcpo corresponds to proving the theorem, and
\item using a fixed point theorem, based on
  Pataraia's~\cite{pataraia:fixed:point}, to construct that point.
\end{enumerate}

\newcommand{\PR}{{\pow\RD}}
\newcommand{\F}{\mathcal{F}_{(S,\sim)}}

Concretely, we need to show that for any Cauchy structure $(S,\sim)$,
the type $\CShom(\RE,S)$ of Cauchy structure morphisms is
contractible.  So for a given Cauchy structure $(S,\sim)$, we define a
certain subdcpo $\F$ of $\PR$ whose elements are subsets
$\Q\subseteq R\subseteq\RE$ for which, loosely speaking, the type of
Cauchy structure morphisms \emph{restricted to $R$} is contractible.
By showing that $\RE$ is an element of $\F$, we have the required
result.  In particular, $\RE$ is found as a fixed point of a certain
$\F$-closed endomap $\Phi$, which extends a subset $R$ to the set of
limits of sequences valued in $R$.

The definitions of $\F$ and $\Phi$ loosely follow the style
of Escard\'o--Simpson, but have some changes since we are showing a
different universal property and working in a different logic.

The construction of $\F$ and $\Phi$, and establishing their required
properties, requires extensive calculations, since the construction of
an element of $\F$ requires showing that a certain type of restricted
Cauchy structure morphisms is contractible.  This contractibility, in
turn, consists of the construction of a restricted Cauchy structure
morphism, and a proof of uniqueness of those restricted Cauchy
structure morphisms.  The fact that the fixed point theorem that we
use has weaker assumptions than, for instance, Kleene's or
Knaster--Tarski's works to our advantage.

Although the proof of Pataraia's fixed point theorem would use the
propositional resizing axiom of Definition~\ref{def:prop-resiz}, we
use a weaker version, Corollary~\ref{cor:subdcpo:fixpoint}, which does not require it.
However, we do use propositional resizing to appeal to
Lemma~\ref{lem:subtypes:dcpo}, which gives that $\PR$ is a dcpo.  We
use the specific construction of the joins in our proof that $\F$ is a
subdcpo.

\subsection{Dcpos}
\label{sec:dcpos}

A general theory of dcpos is developed in~\textcite{jong:scott:model}, where
adequate universe levels are calculated in full detail.  We sidestep this by
simply assuming propositional resizing, in which case the topos-theoretic
approach works out as usual.

\begin{definition}\label{def:orders:pre:po}\leavevmode
  A \emph{partial order} is a set $X:\UU$ with a binary relation $R: X\to X\to \Prop_\UU$ which is:
  \begin{enumerate}
  \item \emph{reflexive}, i.e.
    $\fa{x:X}Rxx$;
  \item \emph{antisymmetric}, i.e.
    $\fa{x,y:X}Rxy \Implies Ryx\Implies
    x=y$;
  \item \emph{transitive}, i.e.
    $\fa{x,y,z:X}Rxy \Implies Ryz\Implies
    Rxz$;
  \end{enumerate}
\end{definition}

\newcommand{\D}{\mathcal{D}}

\newcommand{\subdcpo}{\operatorname{{subdcpo}}}
\begin{definition}
  Let $(A,\leq)$ be a partially ordered set.
  \begin{enumerate}
  \item An endomap $f:A\to A$ is \emph{inflationary} if it is
    \emph{monotonic}, i.e.\ $\fa{x,y:A}x\leq y\Implies f(x)\leq f(y)$,
    and \emph{increasing}, i.e. $\fa{x:A}x\leq f(x)$.
  \item A subset $\D:\pow A$ of $A$ is \emph{semidirected} if
    whenever $x,y\in \D$, there exists $z\in \D$ with $x\leq z$ and
    $y\leq z$.
  \item A subset $\D$ of $A$ is \emph{directed} if it is semidirected
    and inhabited.
  \item A partial order $(A,\leq)$ is a \emph{directed-complete partial order (dcpo)}
    if every directed subset $\D:\pow A$ of $A$ has a join in $A$, i.e.\ has
    an upper bound $w:A$ of $\D$ such that if $v$ is also an upper
    bound of $\D$, then $w\leq v$.
  \item A subset $B :\pow A$ of a dcpo $(A,\leq)$ is a \emph{subdcpo}
    if whenever $\D$ is a directed subset of $A$ contained in $B$, its
    join is contained in $B$.
  \item If we need to be precise about universe levels, for a given partial
    order $(A, \leq)$ with $A:\UU$, we should consider $\VV$-subsets
    $\D:\pow_\VV A$ to be \emph{$\VV$-semidirected} resp.\
    \emph{$\VV$-directed}, $(A, \leq)$ to be a \emph{$\VV$-dcpo} and
    $B:\pow_\VV A$ to be a \emph{$\VV$-subdcpo} of $A$.
  \end{enumerate}
\end{definition}
The following lemma justifies the name subdcpo.
\begin{lemma}\label{lem:subdcpo:to:dcpo}
  A $\VV$-subdcpo $B:\pow_\VV A$ of a $\VV$-dcpo $(A,\leq)$ gives rise
  to a $\VV$-dcpo $(\sm{b:A}b\in B,\leq)$ of elements in $B$ with the
  ordering given by restriction as
  \[
    (b,\mu)\leq(b',\mu')\coloneqq b\leq b'.
  \]
  Here we assume that $\UU\sqsubseteq\VV$.
\end{lemma}
\begin{proof}
  Let $\D:\pow_\VV(\sm{b:A}b\in B)$ be a directed subset of
  $\sm{b:A}b\in B$.  Now we see $\D$ as a directed subset of $A$, by defining $\D'$ as
  \[
    (d\in\D')\coloneqq \sm{\mu:d\in B}((d,\mu)\in \D)
  \]
  $\D'$ is directed because $\D$ is, and contained in the subdcpo $B$, so that
  it has a join in $B$, that is, $\bigvee \D'\in B$, which gives a join of $\D$
  in $\sm{b:A}b\in B$.
\end{proof}

Finally, since every complete lattice is a dcpo, we have our main example:
\begin{lemma}\label{lem:subtypes:dcpo}
  Assuming propositional resizing, for any $A:\UU$, the type
  $\pow A$ of subtypes of $A$ is a dcpo under the
  $\subseteq$ ordering.
\end{lemma}

\subsection{Fixed points}
\label{sec:dcpo:fixpoints}

Given a certain endomap $f:A\to A$ on a dcpo, we aim to construct a fixed point
of $f$.  Perhaps surprisingly, if we additionally have that $f$ is increasing,
so that it is inflationary, then we do not need propositional resizing to
compute a fixed point, and this is Corollary~\ref{cor:subdcpo:fixpoint} below.

\begin{proposition}[Pataraia~\cite{pataraia:fixed:point},
  Escard\'o--Simpson~\cite{escardo:simpson:interval}]\label{prop:pataraia:prop}
  Let $(A,\leq)$ be a $\UU$-dcpo with $A:\UU$.  The subset $I:\pow_\UU(A\to A)$ of
  $A\to A$ of inflationary endomaps, given by
  \[
    (f\in I)\coloneqq(\fa{x,y:A}x\leq y\Implies f(x)\leq
    f(y))\land(\fa{x:A}x\leq f(x)),
  \]
  is a $\UU$-subdcpo.  $I$ is a $\UU$-directed
  subset of $I$, so that $I$ has a top element $\top$.  Given a point
  $x:A$, $\top(x)$ is a common fixed point of all inflationary maps on
  $A$.
\end{proposition}
\begin{proof}
  Let $\D\subseteq I$ be directed.  To show that its join $\bigvee \D$
  in $A\to A$ is an inflationary map, notice that if $x\leq y$ in $A$
  then $\left(\bigvee \D\right)(y)$ is an upper bound of $\D[x]$, and
  that for $x:A$ and any $f\in \D$, we have $x\leq f(x)$, so that
  $\left(\bigvee \D\right)(x)$ is an upper bound of $f(x)$ and hence of
  $x$.

  $I$ is semidirected in $I$ because for $f,g\in I$ we have $f,g\leq f \circ g$
  where the latter is again inflationary.  It is inhabited because the identity
  map is inflationary.  Hence $I$ is directed.

  Let $x:A$ and let $f:A\to A$ be inflationary, so that in particular
  $\top \leq f \circ \top$.  Since $f\in I$, hence $f\circ\top\in I$, thus
  $f\circ \top \leq \top$, and hence $f \circ \top = \top$, making $\top(x)$ a
  fixed point of $f$.
\end{proof}
The following corollary is the fixed point theorem we will use in Section~\ref{sec:eucl-reals-satisfy}.
\begin{corollary}\label{cor:subdcpo:fixpoint}
  Let $(A,\leq)$ be a $\VV$-dcpo, and $f:A\to A$ an inflationary
  endomap.  If $B:\pow_\VV A$ is an $f$-closed subdcpo of $A$,
  then from a point of $B$ we can construct a fixed point of $f$.
\end{corollary}
\begin{proof}
  The type $\sm{b:A}b\in B$ of elements in $B$ is a
  $\VV$-dcpo by Lemma~\ref{lem:subdcpo:to:dcpo}, and $f:A\to A$ gives
  rise to an inflationary endomap on it, so that
  Proposition~\ref{prop:pataraia:prop} applies.
\end{proof}

\subsection{Quantification over subtypes}
\label{sec:quantifier:subtype}

Given a subtype $B:\pow A$ of $A$, we sometimes consider
\emph{only} the elements of $A$ that happen to be in $B$.  In other
words, we consider the elements of the type $\sm{b:A}b\in B$
corresponding via Lemma~\ref{lem:embeddings-subtypes} to $B$.  We
introduce the following notation.

\begin{definition}\label{def:notation:quant:subtype}
  For $A:\UU$, $B:\pow A$ and $C:\left(\sm{a:A}a\in B\right)\to\UU$
  and $D:(\sm{a:A}a\in B)\to\Prop$, we write
  \begin{align*}
    \dpt{b\in B}C(b)\coloneqq{}
    &\dpt{b:A}\dpt{\nu:b\in B}C(b,\nu),\\
    \sm{b\in B}C(b)\coloneqq{}
    &\sm{b:A}\sm{\nu:b\in B}C(b,\nu),\\
    \fa{b\in B}D(b)\coloneqq{}
    &\fa{b:A}\fa{\nu:b\in B}D(b,\nu),\\
    \ex{b\in B}D(b)\coloneqq{}
    &\ex{b:A}\ex{\nu:b\in B}D(b,\nu).\\
  \end{align*}
  For $C:A\to\UU$ and $D:A\to\Prop$, this simplifies to the notation
  \begin{align*}
    \dpt{b\in B}C(b)\coloneqq{}
    &\dpt{b:A}b\in B \to C(b),\\
    \sm{b\in B}C(b)\coloneqq{}
    &\sm{b:A}b\in B\times C(b),\\
    \fa{b\in B}D(b)\coloneqq{}
    &\fa{b:A}b\in B\Implies D(b),\\
    \ex{b\in B}D(b)\coloneqq{}
    &\ex{b:A}b\in B\land D(b).\\
  \end{align*}
  For $C:\UU$, this further simplifies to the notation for function
  types
  \begin{align*}
    B\to C\coloneqq{}
    &\dpt{b:A}b\in B \to C.
  \end{align*}
\end{definition}
\begin{remark}
  A different way to read the above notations is using the
  correspondence of Lemma~\ref{lem:embeddings-subtypes}, so that, for
  instance,
  \[
    \dpt{b\in B}C(b)\coloneqq\dpt{t:\sm{b:A}b\in B}C(t).
  \]
  It is straightforward to check that this type coincides with the
  above interpretation.
\end{remark}
\begin{remark}
  Following the correspondence of Lemma~\ref{lem:embeddings-subtypes},
  for $B:\pow A$ and $C:\UU$, we read $C\to B$ as the type
  $C\to\sm{b:A}b\in B$.
\end{remark}

\subsection{Homotopy-initiality of the Euclidean reals}
\label{sec:eucl-reals-satisfy}

\begin{theorem}\label{thm:euclidean:hinit}
  Assuming propositional resizing, the Euclidean reals satisfy the
  universal property of the HoTT book reals of
  Section~\ref{sec:reals:hott} for sets.  That is, for a Cauchy
  structure $(S,\sim)$, where $S$ is a set in any universe, the type $\CShom(\RE,S)$
  of Cauchy structure morphisms from $\RE$ to $S$ is
  contractible.
\end{theorem}
\begin{remark}
  Since $S$ is a set, saying that $\CShom(\RE,S)$ is contractible is
  equivalent to saying that there exists a Cauchy structure
  morphism from $\RE$ to $S$, and any two such morphisms are pointwise
  equal.
\end{remark}
It would be desirable to be able to prove homotopy-initiality for
arbitrary types $S$, rather than only for sets, but we leave this as
an open problem.  A similar issue arises in work by Awodey, Frey and
Speight on impredicative encodings of higher inductive
types~\cite{Awodey:impredicative}.

We refer to the data of the Cauchy structure on $\RD$ as $\rat$, $\lim$ and
$\eq$, and to the data of another Cauchy structure $(S,\sim)$ with subscripts as
$\rat_S$, $\lim_S$, $\eq_S$, $d_{\rat,\rat,S}$, $d_{\rat,\lim,S}$,
$d_{\lim,\rat,S}$ and $d_{\lim,\lim,S}$.

\newcommand{\dcpolevel}{{i+2}}
\begin{proof}
  As in Proposition~\ref{prop:subset:cauchy:struct}, for a given subset
  $Y:\PR$ of the Dedekind reals, we define a subset
  $\CA_Y:\pow\CA_\RD$ of the type $\CA_\RD$ of Cauchy
  approximations in $\RD$, with $x:\CA_\RD$, as
  \[
    \CA_Y(x)\coloneqq\fa{\varepsilon:\Q_+}x_\varepsilon\in Y.
  \]

  For a subset $Y$ of $\RD$ with $\Q\subseteq Y$, we define what it
  means to have a restricted Cauchy structure morphism $Y\to S$.
  Compared to ordinary Cauchy structure morphism as in
  Definition~\ref{def:cauchy:structure}, the essence of the definition
  is that although the output of $\rat:\Q\to\RD$ is always an element
  of $Y$, because $\Q\subseteq Y$, the output of $\lim:\CA_Y\to\RD$
  may not be, and so we require the corresponding preservation
  condition for $Y$ only in the case that it is.  Additionally,
  because $S$ is a set, preservation of the $\eq$ structure is
  automatic.  In conclusion, we define
  \begin{align*}
    \subCShom(Y,S)\coloneqq
    & \sm{f:Y\to S}\\
    & \sm{g:\dpt{\varepsilon:\Q_+}\dpt{u,v\in Y}u\sim_\varepsilon
      v\to f(u) \sim_\varepsilon f(v)}\\
    & (\dpt{q:\Q}f(\rat(q))=\rat_S(q)) \\
    \times
    & (\dpt{x\in\CA_Y}\lim x \in Y \Implies f(\lim x) =
      \lim_S(f\circ x))
  \end{align*}
  where, following Definition~\ref{def:notation:quant:subtype},
  $\dpt{u,v\in Y}C(u,v)$ means $\dpt{u,v:\RD}u,v\in Y\Implies C(u,v)$,
  and similarly $\dpt{x\in\CA_Y}D(x)$ means
  $\dpt{x:\CA_\RD}x\in\CA_Y\Implies D(x)$.

  Note that $\subCShom(\RE,S)\eqv\CShom(\RE,S)$ because $\lim$ is
  always defined on $\RE$.
  The goal is to show that $\RE$ is an element of the subset
  $\F:\pow\PR$ of $\PR$ defined by
  \begin{align*}
    \F(Y)\coloneqq\Q\subseteq Y\subseteq \RE \land
    \isContr(\subCShom(Y,S)),
  \end{align*}
  so that there is a unique Cauchy structure morphism from $\RE$ to
  $S$.  We show this by using Corollary~\ref{cor:subdcpo:fixpoint} to
  construct a fixed point of a certain map $\Phi$ that we will define
  later, and then showing that this fixed point is a Cauchy complete
  subset of $\RE$, so that it must be equal to $\RE$.

  Note that two restricted Cauchy structure morphisms, that is, two
  elements of the type $\subCShom(Y,S)$, are equal iff their
  underlying maps $Y\to S$ are equal, because the remaining data is a
  proposition.

  First, to be more precise, in order to be able to use
  Corollary~\ref{cor:subdcpo:fixpoint}, we show the following claims.
  \begin{claim}\label{cla:dcpo}
    $\PR$ is a dcpo with the relation $\subseteq$.
  \end{claim}
  \begin{claim}\label{cla:subdcpo}
    $\F$ is a subdcpo of $\PR$.
  \end{claim}
  \begin{claim}\label{cla:rat}
    $\Q\in\F$.
  \end{claim}
  \begin{claim}\label{cla:inflationary}
    The map $\Phi:\PR\to\PR$, which we define later, is inflationary.
  \end{claim}
  \begin{claim}\label{cla:closed}
    $\F$ is $\Phi$-closed.
  \end{claim}

  \begin{proof}[Proof of Claim~\ref{cla:dcpo}]
    By Lemma~\ref{lem:subtypes:dcpo}, indeed $\PR$ is a dcpo.
  \end{proof}
  \begin{proof}[Proof of Claim~\ref{cla:subdcpo}]
    To show that $\F$ is a subdcpo, let $\D:\pow\PR$ with $\D\subseteq \F$ be a
    directed subset of $\F$.  Following Lemma~\ref{lem:subtypes-complete}, the
    join of $\D$ in $\PR$ is constructed using propositional resizing as
    $Y\coloneqq\bigcup \D$ with $Y:\PR$, and we claim that it is an element of
    $\F$.  Because $\D\subseteq\F$, the various elements $X\in \D$ come equipped
    to their own restricted Cauchy structure morphism which is unique on $X$,
    and we refer to their underlying maps as $f_X:X\to S$ and
    $g_X:\dpt{\varepsilon:\Q_+}\dpt{u,v\in X}u\sim_\varepsilon v\to f_X (u)
    \sim_\varepsilon f_X (v)$.

    $\Q\subseteq Y \subseteq \RE$ holds because the elements of $\D$
    satisfy this property, and $\D$ is inhabited.

    To show that $\subCShom(Y,S)$ is a proposition, consider two
    restricted Cauchy structure morphisms with maps $f,f':Y\to S$, and
    let $y\in Y$, recalling from
    Definition~\ref{def:notation:quant:subtype} that this means we
    take $y:\RD$ and assume $y\in Y$.  We aim to show the proposition
    $f(y)=f'(y)$, so we may assume to have $X\in \D$ with $y\in X$.
    Both $f$ and $f'$ restrict to restricted Cauchy structure morphisms
    on $X$, where they must both equal the center of contraction given
    by $f_X:X\to S$, and in particular $f(y)=f_X(y)=f'(y)$.

    It remains to find an element of $\subCShom(Y,S)$.

    We construct $f_Y : Y \to S$ as a certain map
    \[
      f_Y':\dpt{y\in Y}\sm{s:S}\ex{X\in \D} y\in X \land f_X (y) = s
    \]
    composed with a projection map that forgets the proof of
    $\ex{X\in \D} y\in X \land f_X (y) = s$.  Notice that for every
    $y$, the codomain $\sm{s:S}\ex{X\in \D} y\in X \land f_X (y) = s$
    of $f_Y'$ is a proposition, because given $s,s':S$ and
    $X,X'\in \D$ with $y\in X$ and $y\in X'$ and $f_X(y)=s$ and
    $f_{X'}(y)=s'$, from the fact that $\D$ is directed, we know that
    there exists $Z\in \D$ with $X,X'\subseteq Z$.  But the map $f_Z$
    restricts to both $X$ and $X'$ where it must be equal to $f_X$ and
    $f_{X'}$, respectively, so that $s=f_X(y)=f_Z(y)=f_{X'}(y)=s'$.

    To construct $f_Y'$, take an element $y\in Y$.  By the construction of
    $Y=\bigcup\D$, this means there exists $X\in \D$ with $y\in X$.  Since the
    codomain is a proposition, we may assume to \emph{have} $X\in \D$ with
    $y\in X$.  Hence we can take $f'_Y(y)$ to be given by $s\coloneqq f_X(y)$.

    To construct
    $g_Y:\dpt{\varepsilon:\Q_+}\dpt{u,v\in Y}u\sim_\varepsilon v\to
    f_Y(u)\sim_\varepsilon f_Y(v)$, let $\varepsilon:\Q_+$, let
    $X,X'\in \D$ with $u\in X$ and $v\in X'$, and let
    $\nu:u\sim_\varepsilon v$.  Because $\D$ is directed, we know that
    there exists $Z\in \D$ with $X,X'\subseteq Z$, so that we can
    output $g_Z(\varepsilon,u,v,\nu)$.

    To show the preservation conditions, first note that
    $\fa{X\in \D}\fa{x \in X}f_Y(x)=f_X(x)$, because $f_Y(x):S$ and
    $f_X(x):S$ both arise as elements of the codomain
    $\sm{s:S}\ex{X\in \D} {y\in X} \land f_X (y) = s$ of $f_Y'$, which
    is a proposition, as shown above.

    To show that $\dpt{q:\Q}f_Y(\rat(q))=\rat_S(q)$, let $q:\Q$.
    Since $\D$ is inhabited, there exists $X\in \D$, and since we are
    showing a proposition, we may assume to have such an $X$.  Then,
    because $f_X$ satisfies the preservation conditions,
    $f_Y(\rat(q))=f_X(\rat(q))=\rat_S(q)$.

    To show that
    \( \dpt{x\in\CA_Y}\lim x \in Y \Implies f_Y(\lim x) =
    \lim_S(f_Y\circ x) \), let $x\in\CA_Y$ and assume $\lim x \in Y$.

    One may be inclined to look for $X\in \D$ with
    $x_\varepsilon\in X$ for all $\varepsilon:\Q_+$, and also
    $\lim x \in X$, suggesting that we need $\D$ to be
    infinitary-directed, meaning that we would have an element in $\D$
    which contains all $x_\varepsilon$.  In fact, we can avoid this by
    observing that $\lim x$ can be computed as the limit of the
    constant Cauchy approximation $\lambda\varepsilon'.\lim x$.  If we
    can show that the Cauchy approximation $f_Y\circ x$ is close to
    the constant Cauchy approximation
    $\lambda\varepsilon'.f_Y(\lim x)$, then we can use $\eq_S$ to show
    the required preservation condition.  We now make this argument
    more precise.

    First, note that since $\RD$ is Cauchy complete indeed we have
    $\lim x=\lim(\lambda\varepsilon'.\lim x)$.  Since $\lim x\in Y$,
    and since we are showing a proposition, we may assume to have
    $X\in \D$ with $\lim x \in X$.  Then
    \begin{align*}
      f_Y(\lim x)
      &=f_X(\lim x)\\
      &=f_X(\lim(\lambda\varepsilon'.\lim x))\\
      &=\lim_S(\lambda\varepsilon'.f_X(\lim x))\\
      &=\lim_S(\lambda\varepsilon'.f_Y(\lim x)).
    \end{align*}
    By $\eq_S$, it suffices to show
    \[
      \fa{\varepsilon:\Q_+}\lim_S(\lambda\varepsilon'.f_Y(\lim
      x))\sim_\varepsilon\lim_S(f_Y \circ x).
    \]
    Let $\varepsilon:\Q_+$.  The distance law $d_{\lim,\lim,S}$ gives us,
    with $\varepsilon/2$, $\varepsilon/4$ and $\varepsilon/4$ respectively for
    $\varepsilon$, $\delta$ and $\eta$:
    \[
      f_Y(\lim x) \sim_{\varepsilon/2} f_Y(x_{\varepsilon/4})\to
      \lim_S(\lambda\varepsilon'.f_Y(\lim x)) \sim_{\varepsilon}
      \lim_S(f_Y\circ x).
    \]
    In order to show the proposition
    \( f_Y(\lim x)\sim_{\varepsilon/2}f_Y(x_{\varepsilon/4}) \), from
    directedness of $\D$ we obtain $X'\in \D$ with $\lim x \in X'$ and
    $x_{\varepsilon/4}\in X'$.  Then
    \[
      g_{X'}(\varepsilon/2,\lim x, x_{\varepsilon/4}):\lim
      x\sim_{\varepsilon/2}x_{\varepsilon/4} \to f_{X'}(\lim
      x)\sim_{\varepsilon/2}f_{X'}(x_{\varepsilon/4} ),
    \]
    and $\lim x\sim_{\varepsilon/2}x_{\varepsilon/4}$ can be shown
    using Cauchy completeness (as in
    Definition~\ref{def:limits:premetric}) of $\RD$.

    This concludes the proof of Claim~\ref{cla:subdcpo} that $\F$ is
    a subdcpo of $\PR$.
  \end{proof}
  \begin{proof}[Proof of Claim~\ref{cla:rat}]
    To show that $\Q\in\F$, note that $\Q\subseteq\Q\subseteq\RE$.  To
    show that the type $\subCShom(\Q,S)$ is a proposition, let
    $f,f':\Q\to S$ be Cauchy structure morphisms.  Since they both
    satisfy the preservation condition for rationals, we have
    $f(\rat(q))=\rat_S(q)=f'(\rat(q))$, as required.

    Now we construct an element of $\subCShom(\Q,S)$.  The map $f_\Q:\Q\to S$ is
    given by $\rat_S$ directly.  Then $g_\Q$ can be constructed using the
    distance law $d_{\rat,\rat,S}$.  The preservation condition for rationals
    holds by definition.  Let $x\in\CA_\Q$ and assume $\lim x\in \Q$.  We need
    to show $f(\lim x)=\lim_S(f\circ x)$, i.e.\
    $\rat_S(\lim x)=\lim_S(f\circ x)$.  So by $\eq_S$ and the second distance
    law $d_{\rat,\lim,S}$ it suffices to show for arbitrary $\varepsilon:\Q_+$
    that $\rat_S(\lim x)\sim_{2\varepsilon/3}f(x_{\varepsilon/3})$, i.e. that
    $\rat_S(\lim x)\sim_{2\varepsilon/3}\rat(x_{\varepsilon/3})$, i.e. by
    $d_{\rat,\rat,S}$ that
    $-2\varepsilon/3<\lim x-x_{\varepsilon/3}<2\varepsilon/3$, which holds
    because $\lim x$ is a limit of $x$.
  \end{proof}
  \begin{proof}[Proof of Claim~\ref{cla:inflationary}]
    We now define an inflationary $\F$-closed map $\Phi$ whose fixed
    point we will show to be $\RE$.

    For $X:\PR$, define $\Phi(X):\PR$ to be the subset of $\RD$ of
    limits of Cauchy approximations valued in $X$, that is:
    \[
      \Phi(X)(y)\coloneqq\ex{x\in\CA_X}y=\lim x
    \]
    The map $\Phi$ is increasing because every real is the limit of a
    constant sequence, and monotone because if $X\subseteq Y$ then
    $\CA_X\subseteq\CA_Y$.
  \end{proof}
  \begin{proof}[Proof of Claim~\ref{cla:closed}]
    To show that $\Phi$ is $\F$-closed, assume $X\in\F$, and note that
    $\Q\subseteq\Phi(X)$ holds because
    $\rat(q)=\lim(\lambda\varepsilon'.\rat(q))$, and
    $\Phi(X)\subseteq\RE$ follows from Cauchy completeness of $\RE$.

    From $\isContr(\subCShom(X,S))$ we obtain $f_X:X\to S$.

    To show that $\subCShom(\Phi(X),S)$ is a proposition, consider two
    restricted Cauchy structure morphisms with maps
    $f,f':\Phi(X)\to S$, and let $x\in\CA_X$.  To show the proposition
    $f(\lim x)=f'(\lim x)$, note that $f$ and $f'$ restrict to the
    same restricted Cauchy structure morphism $f_X$ on $X$.  Then,
    since $f$ and $f'$ satisfy the preservation condition for the
    limit of $x$, we have
    $f(\lim x)=\lim_S(f\circ x)=\lim_S(f'\circ x)=f'(\lim x)$.

    We define $f_{\Phi(X)}:\Phi(X)\to S$ as a certain map
    \[
      f_{\Phi(X)}':\dpt{y\in\Phi(X)}\sm{s:S}\ex{x\in\CA_X}y=\lim x
      \land s=\lim_S (f_X\circ x)
    \]
    followed by a projection map that forgets the proof of
    $\ex{x\in\CA_X}s=\lim_S (f_X\circ x)$.

    First we show that the codomain of $f_{\Phi(X)}'$ is a
    proposition.  For suppose $s,s':S$ and $x,x'\in\CA_X$ with
    $y=\lim x=\lim x'$ and $s=\lim_S(f_X \circ x)$ and
    $s'=\lim_S(f_X\circ x')$.  Because $\lim$ computes limits in
    $\RD$, we know that $\lim x=\lim x'$ implies
    \[
      \fa{\varepsilon,\varepsilon',\theta,\theta':\Q_+}x_\varepsilon\sim_{\varepsilon+\varepsilon'+\theta+\theta'}x_{\varepsilon'}
    \]
    and hence in particular
    \[
      \fa{\varepsilon:\Q_+}x_{\varepsilon/6}\sim_{4\varepsilon/6}x'_{\varepsilon/6}.
    \]
    Now $g_X$ gives us
    \[
      \fa{\varepsilon:\Q_+}f_X(x_{\varepsilon/6})\sim_{4\varepsilon/6}f_X(x'_{\varepsilon/6})
    \]
    and so by the distance law $d_{\lim,\lim,S}$
    \[
      \fa{\varepsilon:\Q_+}\lim_S(f_X\circ x)\sim_{\varepsilon}\lim_S(
      f_X\circ x')
    \]
    and so with the $\eq_S$, we get
    $s=\lim_S(f_X\circ x)=\lim_S(f_X\circ x')=s'$, as required.

    Since the codomain of $f_{\Phi(X)}'$ is a proposition, for a given
    $y\in\Phi(X)$ we may assume to have $x\in\CA_X$ with $y=\lim x$.
    Then we can compute the output as $\lim_S(f_X\circ x)$, completing
    the definition of $f_{\Phi(X)}'$ and $f_{\Phi(X)}:\Phi(X)\to S$.

    To define
    \[
      g_{\Phi(X)}:\dpt{\varepsilon:\Q_+}\dpt{u,v\in
        \Phi(X)}u\sim_\varepsilon v\to f_{\Phi(X)} (u)
      \sim_\varepsilon f_{\Phi(X)} (v),
    \]
    let $\varepsilon:\Q_+$, $x,y\in\CA_X$ and
    $\nu:\lim x\sim_\varepsilon\lim y$.  We aim to show
    $f_{\Phi(X)}(\lim x)\sim_\varepsilon f_{\Phi(X)}(\lim y)$, i.e.\
    $\lim_S(f_X\circ x)\sim_\varepsilon \lim_S(f_X \circ y)$ by the
    above definition of $f_{\Phi(X)}$.

    Since $\lim x\sim_\varepsilon\lim y $, that is,
    $\abs{\lim x-\lim y}<\varepsilon$, by the Archimedean property we
    know that
    \[
      \ex{\delta:\Q_+}\abs{\lim x-\lim y}<\delta<\varepsilon.
    \]
    Since we are showing a proposition, we may assume to have such a
    $\delta$.  Because $\lim$ computes limits in $\RD$, using the
    definition of $\sim$ in $\RD$ we know that
    $\lim x\sim_\delta\lim y$ implies
    \[
      \fa{\xi,\xi',\theta,\theta':\Q_+}x_\xi\sim_{\delta+\xi+\xi'+\theta+\theta'}y_{\xi'}
    \]
    and so in particular with
    $\xi\coloneqq\frac{\varepsilon-\delta}{6}$ we have
    \( x_\xi\sim_{\delta+4\xi}y_{\xi} \).  Then $g_X$ gives
    \(f_X(x_\xi)\sim_{\delta+4\xi}f_X(y_{\xi}) \) and hence by the
    fourth distance law
    \( \lim_S(f_X\circ x)\sim_\varepsilon \lim_S(f_X \circ y) \).

    To show that $f_{\Phi(X)}$ and $g_{\Phi(X)}$ satisfy the coherence
    conditions for restricted Cauchy structure morphisms, let $q:\Q$.
    Then $f_{\Phi(X)}(\rat(q))$ may be computed as
    $\lim_S(\lambda\varepsilon'.f_X(\rat(q)))$, which, by the fact
    that $f_X$ is a restricted Cauchy structure morphism, is equal to
    $\lim_S(\lambda\varepsilon'.\rat_S(q))$.  By $\eq_S$ it suffices
    to show
    \[
      \fa{\varepsilon:\Q_+}\lim_S(\lambda\varepsilon'.\rat_S(q))\sim_\varepsilon\rat_S(q),
    \]
    so let $\varepsilon:\Q_+$.  Then
    $\rat_S(q)\sim_{\varepsilon/2}\rat_S(q)$ by the first distance
    law, and so
    $\lim_S(\lambda\varepsilon'.\rat_S(q))\sim_{\varepsilon}\rat_S(q)$
    by the distance law $d_{\lim,\rat,S}$.

    For the second preservation condition, let $x\in\CA_{\Phi(X)}$ and
    assume $\lim x\in\Phi(X)$, that is,
    $\ex{x'\in\CA_X}\lim x =\lim x'$.  Since we are showing the
    proposition $f_{\Phi(X)}(\lim x)=\lim_S(f_{\Phi(X)}\circ x)$, let
    $x'$ be such, so that we have to show
    $\lim_S(f_X\circ x')=\lim_S(f_{\Phi(X)}\circ x)$.  By $\eq_S$, it
    suffices to show for $\varepsilon:\Q_+$ that
    \[
      \lim_S(f_X\circ x')\sim_\varepsilon\lim_S(f_{\Phi(X)}\circ x).
    \]
    Using the distance law $d_{\lim,\lim,S}$, it suffices to show
    \[
      f_X(x'_{\varepsilon/6})\sim_{4\varepsilon/6}f_{\Phi(X)}(x_{\varepsilon/6}).
    \]
    Now
    \[
      f_X(x'_{\varepsilon/6})=f_X(\lim(\lambda\varepsilon'.x'_{\varepsilon/6}))=\lim_S(f_X\circ(\lambda\varepsilon'.x'_{\varepsilon/6}))=f_{\Phi(X)}(\lambda\varepsilon'.x'_{\varepsilon/6})=f_{\Phi(X)}(x'_{\varepsilon/6}),
    \]
    so this is equivalent to
    \[
      f_{\Phi(X)}(x'_{\varepsilon/6})\sim_{4\varepsilon/6}f_{\Phi(X)}(x_{\varepsilon/6})
    \]
    and so by $g_{\Phi(X)}$ it suffices to show
    \[
      x'_{\varepsilon/6}\sim_{4\varepsilon/6}x_{\varepsilon/6}
    \]
    which holds because $\lim x =\lim x'$.

    This concludes the proof of Claim~\ref{cla:closed} that $\Phi$ is $\F$-closed.
  \end{proof}
  Hence, by Corollary~\ref{cor:subdcpo:fixpoint}, $\Phi$ has a fixed
  point $R$ in $\F$.  By definition, $R\subseteq\RE$.  It
  remains to show that $\RE\subseteq R$, which will follow from the
  fact that $R$ is a Cauchy complete subset of the Dedekind reals
  containing the rationals.  Additionally, the fact that $R$ contains
  the rationals is part of the definition of $\F$, so we
  only have to show that $R$ is Cauchy complete.

  Let $x\in\CA_R$.  By definition, we have $\lim x \in\Phi(R)$.
  Since $R$ is a fixed point of $\Phi$, we have $\lim x\in R$, and
  this is a limit of $x$ because it is its limit in $\RD$.

  Hence $\RE$ is an element of $\F$.
\end{proof}

Finally, we consider what happens in the absence of propositional resizing.

The construction of $\RE$ will not go through, as our construction of an
intersection of subsets $\pow_\UU\RD$ of $\RD$ results in a subset
$\pow_{\VV}\RD$, with $\UU:\VV$.  This is not a true meet because, living in the
wrong universe, it has the wrong type.

But suppose given any Cauchy complete $\R:\pow_\UU\RD$ containing the rationals,
which is the least such subset, can we prove a homotopy-initiality theorem
similar to Theorem~\ref{thm:euclidean:hinit}, replacing instances of $\RE$ with
$\R$?  The fixed point theorem, Corollary~\ref{cor:subdcpo:fixpoint}, that we
used in the proof of Theorem~\ref{thm:euclidean:hinit}, does not use
propositional resizing, and we also do not need it to construct the desired
fixed point $\R$ since we simply assume it to be given.  We cannot
straightforwardly apply Corollary~\ref{cor:subdcpo:fixpoint}, since we cannot
show $\PR$ to be a dcpo.  It may suffice to see $\PR$ as a partial order, and
$\F$ as a subdcpo of that partial order in the sense that it contains all the
joins of directed subsets \emph{that exist in $\PR$}.  Showing that $\F$ is a
subdcpo in this sense would still require the construction of a restricted
Cauchy structure morphism with some underlying maps $f_Y$ and $g_Y$ for a join
$Y$ of a directed subset $\D$ as in the proof above.  In the absence of
propositional resizing, we can not construct $Y$ using existential quantifiers
as in Lemma~\ref{lem:subtypes-complete}, and so the construction of $f_Y$ and $g_Y$
in the proof of Theorem~\ref{thm:euclidean:hinit} will not go through.

\section{Conclusion}
\label{sec:universal:open-questions}

Thanks to our systematic use of Cauchy structures, we have written a rather
short proof that the HoTT book reals coincide with the Euclidean
reals in Proposition~\ref{prop:hott-euclidean-subset}, without relying
on propositional resizing.

In the presence of propositional resizing, we can define $\RE$.
Theorem~\ref{thm:euclidean:hinit}, showing that $\RE$ is a
homotopy-initial Cauchy structure, without assuming that $\RH$ exists,
is new.  Two open questions remain in regard to this result:
\begin{enumerate}
\item Can we show homotopy-initiality with respect to arbitrary types
  equipped with Cauchy structures, rather than only sets?  Note that a
  type with a Cauchy structure is not automatically a set: given a
  Cauchy structure on a type $X$, we can assign a Cauchy structure to
  $X+Y$ for an arbitrary type $Y$, with elements in the right disjunct
  being assigned an infinite distance to all elements.
\item What homotopy-initiality can be shown in the absence of
  propositional resizing, given only a least Cauchy complete subset of
  the Dedekind reals, without knowing its construction as an
  intersection of subsets of $\RD$?
\end{enumerate}














{
  \printbibliography[heading=bibintoc]
}

\end{document}